\newcommand{\N}{\mathbb{N}}
\title{Avalanche Structure in the Kadanoff Sand Pile Model
\thanks{Partially supported by  IXXI (Complex System Institute, Lyon) and ANR project  Subtile. }}
\author{Kevin Perrot \and Eric R\'emila}
\institute{Universit\'e  de Lyon\\
  Laboratoire de l'Informatique du Parall\'elisme, \\
  (umr 5668 CNRS - ENS Lyon - Universit\'e  Lyon 1),\\
46 all\'ee d'Italie 69364 Lyon Cedex 7 - France,\\
\email{\{kevin.perrot,eric.remila\}@ens-lyon.fr }}
\begin{document}

\maketitle

\begin{abstract}
  Sand pile models are dynamical systems emphasizing the phenomenon of {\em Self Organized Criticality} (SOC). From $N$ stacked grains, iterating evolution rules leads to some critical configuration where a small disturbance has deep consequences on the system, involving numerous steps of grain fall. Physicists L. Kadanoff {\em et al} inspire KSPM, a model presenting a sharp SOC behavior, extending the well known {\em Sand Pile Model}. In KSPM with parameter $D$ we start from a pile of $N$ stacked grains and apply the rule: $D\!-\!1$ grains can fall from column $i$ onto the $D\!-\!1$ adjacent columns to the right if the difference of height between columns $i$ and $i\!+\!1$ is greater or equal to $D$. We propose an iterative study of KSPM evolution where one single grain addition is repeated on a heap of sand. The sequence of grain falls following a single grain addition is called an avalanche. From a certain column precisely studied for $D=3$, we provide a plain process describing avalanches.\\
  
\textbf{Keywords:} Discrete dynamical system, self-organized criticality, sand pile model.

\end{abstract}

\section{Introduction}

Sand pile models were introduced in \cite{bak88} as systems presenting a critical self-organized behavior, a property of dynamical systems having critical points as attractors. In the scope of sand piles, starting from an initial configuration of $N$ stacked grains the local evolution of particles is described by one or more iteration rules. Successive applications of such rules alter the configuration until it reaches an attractor, namely a stable state from which no rule can be applied. SOC property means those attractors are critical in the sense that a small perturbation | adding some more grains | involves an arbitrary deep reorganization of the system. Sand pile models were well studied in recent years (\cite{goles93},\cite{durandlose98},\cite{formenti07},\cite{phan08}).

\subsection{Kadanoff sand pile model}

  In \cite{kadanoff89}, Kadanoff proposed a generalization of classical models closer to physical behavior of sand piles in which more than one grain can fall from a column during one iteration. Informally, Kadanoff sand pile model with parameter $D$ and $N$ grains is a discrete dynamical system, which initial configuration is composed of $N$ stacks grains, moving in discrete space and time according to a transition rule : if the height difference between column $i$ and $i+1$ is greater or equal to $D$, then $D-1$ grains can fall from column $i$ to the $D-1$ adjacent columns on the right (see figure \ref{fig:rule}).

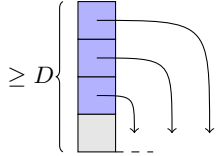
\begin{figure}[!h]
  \begin{center}
    \begin{tikzpicture}
  \foreach \y in {1,...,3}
    \filldraw[fill=blue!30] (0,.5*\y) rectangle ++ (.5,.5);
  \filldraw[fill=black!10] (0,0) rectangle ++ (.5,.5);
  \draw[dashed] (.5,0) -- ++ (.5,0);
  \draw[decorate, decoration=brace] (-.2,0) -- node [left] {$\geq D$} ++ (0,2);
  \draw[->] (.25,.75) .. controls (.75,.75) .. (.75,.25);
  \draw[->] (.25,1.25) .. controls (1.25,1.25) .. (1.25,.25);
  \draw[->] (.25,1.75) .. controls (1.75,1.75) .. (1.75,.25);
\end{tikzpicture}
  \end{center}
  \caption{KSPM($D$) transition rule.}
  \label{fig:rule}
\end{figure}

Sand pile models are specializations of {\em Chip Firing Games} (CFG). A CFG is played on a graph in which each vertex $v$ has a load $l(v)$ and a threshold $t(v)=deg^+(v)\footnote{$deg^+(v)$ denotes the out-degree of $v$.}$, and the transition rule is: if $l(v)\geq t(v)$ then $v$ gives one unit to each of its neighbors (we say $v$ is fired). As a consequence, we inherit all  properties of CFGs. 

Kadanoff sand pile is referred to a {\em linear chip firing game} in \cite{goles02}. The authors show that the set of reachable configurations endowed with the order induced by the successor relation has a lattice structure, in particular it has a unique {\em fixed point}. Since the model is non-deterministic, they also prove \emph{strong convergence} {\em i.e.} the number of iterations to reach the fixed point is the same whatever the evolution strategy is. The morphism from KSPM(3) to CFG is depicted on figure \ref{fig:lcfg}.

  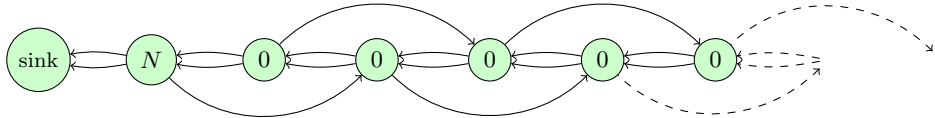
\begin{figure}[!h]
  \begin{center}
    \begin{tikzpicture}
  \node[circle, draw, fill=green!20] (n0) at (0,0) {$N$};
  \node[circle, draw, fill=green!20] (n-1) at (-1.5,0) {\scriptsize sink}
    edge [<-,out=10,in=170] (n0)
    edge [<-,out=-10,in=-170] (n0);
  \node[circle, draw, fill=green!20] (n1) at (1.5,0) {0}
    edge [->,out=170,in=10] (n0)
    edge [->,out=-170,in=-10] (n0);
  \node[circle, draw, fill=green!20] (n2) at (3,0) {0}
    edge [->,out=170,in=10] (n1)
    edge [->,out=-170,in=-10] (n1)
    edge [<-,out=-135,in=-45] (n0);
  \node[circle, draw, fill=green!20] (n3) at (4.5,0) {0}
    edge [->,out=170,in=10] (n2)
    edge [->,out=-170,in=-10] (n2)
    edge [<-,out=135,in=45] (n1);
  \node[circle, draw, fill=green!20] (n4) at (6,0) {0}
    edge [->,out=170,in=10] (n3)
    edge [->,out=-170,in=-10] (n3)
    edge [<-,out=-135,in=-45] (n2);
  \node[circle, draw, fill=green!20] (n5) at (7.5,0) {0}
    edge [->,out=170,in=10] (n4)
    edge [->,out=-170,in=-10] (n4)
    edge [<-,out=135,in=45] (n3);
  \node (n6) at (9,0) {}
    edge [->,dashed,out=170,in=10] (n5)
    edge [->,dashed,out=-170,in=-10] (n5)
    edge [<-,dashed,out=-135,in=-45] (n4);
  \node (n7) at (10.5,0) {}
    edge [<-,dashed,out=135,in=45] (n5);
\end{tikzpicture}
  \end{center}
  \vspace{-.5cm}
  \caption{The initial configuration $\sigma$ of KSPM(3) is presented as a CFG where each vertex corresponds to a column (except the sink) seen as a difference of height.}
  \label{fig:lcfg}
\end{figure}

  More formally, sand pile models we consider are defined on the space of ultimately null decreasing integer sequences. Each integer represents a column of stacked sand grains and transition rules describe how grains can move from columns. Let $h=(h_0,h_1,h_2,\dots)$ denote a {\em configuration} of the model, where each integer $h_i$ is the number of grains on column $i$. Configurations can also be given as height difference $\sigma=(\sigma_0,\sigma_1,\sigma_2,\dots)$, where for all $i \geq 0,~ \sigma_i=h_i-h_{i+1}$. We will use this latter representation throughout the paper, within the space of ultimately null non-negative integer sequences.

\begin{definition}
The   Kadanoff sand pile model with parameter $D$, KSPM($D$), is defined by:
  \begin{itemize}
    \item A set of \emph{configurations}, consisting in ultimately null non-negative integer sequences.
    \item A set of \emph{transition rules} : we have a transition from a configuration $\sigma$ to a configuration $\sigma '$ on column $i$, and we note   $\sigma \overset{i}{\rightarrow} \sigma'$ if  
    \begin{itemize}
\item $\sigma'_{i-1}=\sigma_{i-1} + D-1$ (for $i \neq 0$)
\item $\sigma'_i = \sigma_i - D$, 
\item $\sigma'_{i+D-1} = \sigma_{i+D-1} + 1$
\item $\sigma'_j = \sigma_j$ for $j \not\in  \{i-1, i, i+D-1 \}$. 
\end{itemize}

  \end{itemize}
\end{definition}

Remark that according to the definition of the transition rules, a condition for $\sigma'$ to be a configuration is that $\sigma_i \geq D$.

\subsection{Strategies and avalanches}

A basic property of the KSPM model is the \emph{diamond property}. If there exists two distinct integers $i$ and $j$ such that 
$\sigma \overset{i}{\rightarrow} \sigma'$ and $\sigma \overset{j}{\rightarrow} \sigma''$, then there exists a configuration $\sigma'''$ such that $\sigma' \overset{j}{\rightarrow} \sigma'''$  and $\sigma'' \overset{i}{\rightarrow} \sigma'''$. 
We  note $\sigma \rightarrow \sigma'$ when there exists an integer $i$ such that $\sigma \overset{i}{\rightarrow} \sigma'$. 
We define the transitive closure $\overset{*}{\rightarrow}$ of $\rightarrow$, and say that $\sigma'$ is {\em reachable} from $\sigma$ when  $\sigma \overset{*}{\rightarrow} \sigma'$.

A {\em strategy} is a sequence $s=(s_1,\dots,s_T)$. We say that $\sigma'$ is {\em reached} from $\sigma$ via $s$ when $\sigma \overset{s_1}{\rightarrow} \sigma'' \overset{s_2}{\rightarrow} \dots \overset{s_T}{\rightarrow} \sigma'$ and we note $\sigma \overset{s}{\rightarrow} \sigma'$. 
We also say,  for each integer $t$ such that $0 < t \leq T$, that the column $s_t$ \emph{is fired} at \emph{time} $t$ in $s$. (informally,  the index of the sequence is interpreted as time). 

 For any strategy $s$ and any nonnegative integer $i$, we state $|s|_i=\#\{ t | s_t = i \}$.  Let   $s^0$, $s^1$ be  two strategies such that $\sigma \overset{s^0}{\rightarrow} \sigma^0$ and $\sigma \overset{s^1}{\rightarrow} \sigma^1$.  We have the equivalence:  $[\forall~ i, |s^0|_i = |s^1|_i] \Leftrightarrow \sigma^0 = \sigma^1$.
 A strategy $s$ such that $\sigma \overset{s}{\rightarrow} \sigma'$ is called {\em leftmost} if it is the minimal strategy from $\sigma$ to $\sigma'$ according to lexicographic order. A leftmost strategy is such that at each iteration, the leftmost possible transition is performed. 

We say that a configuration $\sigma$ is \emph{stable}, or a \emph{fixed point} if no transition is possible from $\sigma$. 
As a  consequence of the diamond property, one can easily check that, for each configuration $\sigma$, there exists a unique stable configuration, denoted by $\pi(\sigma)$, such that  $\sigma \overset{*}{\rightarrow} \pi(\sigma)$. Moreover,  for any  
configuration $\sigma '$ such that $\sigma \overset{*}{\rightarrow} \sigma '$, we have $\pi(\sigma') = \pi(\sigma)$ (see \cite{goles02} for details). 

In this paper, we are interested in the iterative process defined below. Starting with no grain, we successively add a single  grain on  column 0, and make all the possible firings until  a fixed point is reached. We denote by $\pi(k)$ the configuration obtained with this process using $k$ grains (from the structure of KSPM described above, one easily checks that $\pi(k) = \pi((k,0^\omega)$).

Let $\sigma$ be a configuration, $\sigma^{\downarrow 0}$ is the configuration obtained by adding one grain on column $0$. In other words, if $\sigma=(\sigma_0
,\sigma_1,\dots)$, then $\sigma^{\downarrow 0}=(\sigma_0 +1 ,\sigma_1,\dots)$. 

Formally the process is defined by $\pi (0) = 0^\omega$ and the recurrence formula: 
$$\pi(\pi(k-1)^{\downarrow 0})  =   \pi (k). $$

The {\em $k^{th}$ avalanche} $s^k$ is  the leftmost strategy from $\pi(k-1)^{\downarrow 0}$ to $\pi(k)$. 
The goal of the present paper is the description of avalanches. Informally, we want to describe what happens when a new grain is added in a previously stabilized sand pile.

For $D = 2$, i.e.  the classical SPM, this description is easy: the added grain moves rightwards until it arrives in a plateau. But, for $D > 2$,  the situation is not so simple. We now state our results. 

 \begin{itemize}
\item In the general case, we prove (Section \ref{section:avalanche}) the following properties:  
 \begin{itemize}
\item Each column is fired at most once, 
\item For any avalanche, as soon as an interval  $\{ L, L+1, ...., L+D-1 \} $ of successive fired columns exists, the execution of the avalanche on the right part of this interval can be turned into a pseudo local  and elementary process. 

\end{itemize}
Informally, that means that the knowledge of such an interval guarantees a regular behavior  of the avalanche on its right part. 

\item  In the case when $D = 3$,  we prove (Section \ref{section:bounding}) the property below: 
\begin{itemize}
\item For each avalanche $s^k$, there  exists an integer  $L(k)$ in $O(\log{k})$ such that either no column is fired on the right of $L(k)$,  or  columns $L(k)$ and $L(k) +1$ both are fired (and therefore, the property of the second item above applies). 
\end{itemize}
Informally, that means that we have the emergence of  a regular behavior, after a short transitional and complex phase. 
\end{itemize}

These results give a better understanding of avalanches for sufficiently large columns. We hope that in future work, they will help us in the  approach of the structures of fixed points $\pi(k)$. 

\subsection{The context}

 The problem of describing and proving regularity properties, experimentally checked,  for   models issued from  basic dynamics is really a present challenge for physicists, mathematicians, and computer scientists. There exists
a lot of conjectures, issued from simulations, on  discrete dynamical systems with simple local rules (sandpile model \cite{dartois} or chip firing games, but also  rotor router  \cite{levine},  the famous Langton ant\cite{gajardo}\cite{propp}...)  but very few results have actually been proved. As regards KSPM($D$), the {\em prediction problem} (namely, the problem of computing the fixed point $\pi(k)$ knowing $\pi(k-1)$) has been proven in \cite{moore99} to be in \textbf{NC}$^3$ for the one dimensional case (the model of our purpose), which means that the time needed to compute an avalanche is in $O(\log^3 N)$ where $N$ is the number of grains, and \textbf{P}-complete when the dimension is $\geq 3$.  A recent study (\cite{goles10}) showed that in the two dimensional case the avalanche problem (given a configuration $\sigma$ and a column $i$ on which we add one grain, does it have an influence an index $j$?) is \textbf{P}-complete, which points out a inherently sequential behavior.

This study will provide tools to understand sand pile evolution. We hope that those tools form a basis to obtain  some good descriptions of fixed points $\pi(k)$, but are also deeply related with other subjects around sand piles such as unit elements of abelian group structures presented 
in \cite{creutz96} and \cite{dhar90}.

\section{Avalanche process in the general case}\label{section:avalanche}

This section begins with a first glance at avalanches, allowing notation simplifications. Then avalanches are studied in details, leading to a simplified description of its behavior.

\begin{proposition}\label{lemma:01}
  For each strategy $s$  such that  $\pi(N)^{\downarrow 0}  \overset{s} {\rightarrow}\pi(N+1)$ and  each $i \in \N$ , we have  $|s|_i \in \{0,1\}$. 
\end{proposition}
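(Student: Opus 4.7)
The plan is to prove the statement by induction on the length $T$ of the strategy $s=(s_1,\dots,s_T)$. More precisely, let $\sigma^{(t)}$ denote the configuration after the first $t$ firings (so $\sigma^{(0)} = \pi(N)^{\downarrow 0}$), and show by induction on $t$ that in every prefix $(s_1,\dots,s_t)$ each integer occurs at most once. The base case $t=0$ is vacuous; it remains to handle the inductive step. This argument does not need ``leftmost'' anywhere, so it will establish the proposition for every strategy at once, without invoking the diamond property.

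For the inductive step, assume no column occurs twice among $s_1,\dots,s_t$ and suppose, for contradiction, that $s_{t+1}=i$ with $i = s_{t_1}$ for some $t_1 \le t$. The firing condition forces $\sigma^{(t)}_i \ge D$. I would now write $\sigma^{(t)}_i$ explicitly using the transition rule: the only firings affecting $\sigma_i$ are those of columns $i$ (contribution $-D$ each), $i+1$ (contribution $+(D-1)$ each) and $i-D+1$ (contribution $+1$ each, ignored if this index is negative). Denoting by $\gamma,\alpha,\beta$ the corresponding firing counts among $s_1,\dots,s_t$, one gets
\[
\sigma^{(t)}_i \;=\; \sigma^{(0)}_i \;-\; D\,\gamma \;+\; (D-1)\,\alpha \;+\; \beta.
\]
The inductive hypothesis gives $\gamma,\alpha,\beta \le 1$, and the scenario we are refuting sets $\gamma=1$.

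The remaining step is a tight numerical bound on $\sigma^{(0)}_i$. Since $\pi(N)$ is a fixed point, $\pi(N)_i \le D-1$ for every $i$. Hence $\sigma^{(0)}_i \le D-1$ for $i \ge 1$ and $\sigma^{(0)}_0 \le D$ (the added grain). Plugging these into the equality above yields
\[
\sigma^{(t)}_i \;\le\; (D-1) - D + (D-1)\cdot 1 + 1 \;=\; D-1 \qquad (i \ge 1),
\]
\[
\sigma^{(t)}_0 \;\le\; D - D + (D-1)\cdot 1 + 0 \;=\; D-1 \qquad (i = 0),
\]
with the middle term $\beta$ dropped for $i=0$ and for $1 \le i < D-1$ (where $i-D+1<0$), which only makes the bound tighter. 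In every case $\sigma^{(t)}_i \le D-1 < D$, contradicting the assumption that column $i$ is fireable at time $t+1$. This closes the induction and proves the proposition.

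The proof is essentially an accounting argument, so there is no deep obstacle; the only thing one has to be careful with is the boundary case $i=0$, where the grain-addition allows $\sigma^{(0)}_0=D$ but the absent column $i-D+1$ kills the $\beta$ term, so the bound still comes out to exactly $D-1$. The inequalities are tight ($2D-1$ vs.\ the needed $2D$ in the rearranged form), which is what makes the proposition true and which also suggests that the argument does not leave much slack to strengthen the statement further.
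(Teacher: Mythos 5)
Your proof is correct and follows essentially the same route as the paper's: an induction on the prefix length, an accounting of the only three columns ($i$, $i+1$, $i-D+1$) whose firings affect $\sigma_i$, and the fixed-point bound $\pi(N)_i < D$, with the boundary case $i=0$ handled by dropping the $i-D+1$ contribution exactly as the paper does. The only cosmetic difference is your explicit $\gamma,\alpha,\beta$ bookkeeping and the (accurate) remark that leftmostness is never needed.
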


\begin{proof}
  Let $s=(s_1,\dots,s_T)$ be a strategy such that  $\pi(N)^{\downarrow 0}  \overset{s} {\rightarrow}\pi(N+1)$. We have to  prove  that, for  
  $1 \leq l < m \leq T$, we have $s_l \not = s_m$ (obviously, $|s|_i \geq 0$ for all $i$). To do it,  we prove by induction $t \leq T$ that for
$1 \leq l < m \leq t$, we have $s_l \not = s_m$.\\ 
  For initialization this is obviously true for $t = 1$. 
   Now assume that the condition is satisfied for an integer $t$ such $t < T$, and let $i$ be a column such that there exists an integer $l \leq t$ such that $i = s_l$. Let $\sigma$  be the  configuration such that $\pi(N)^{\downarrow 0} \overset{s_1}{\rightarrow} \dots \overset{s_t}{\rightarrow} \sigma$.\\
  Notice that the transitions which can possibly change the value of the current configuration at $i$ could be: $i$ (which decreases the value by $D$ units), $i+1$ (which increases the value by $D -1$ units) or  $i - D+1$ (which increases the value by $1$ unit).\\
  Thus we have $\sigma_i \leq  \pi(N)^{\downarrow 0}_i - D + D-1 +1$ since by definition,  between   $\pi(N)$ and $\sigma$, exactly one transition has occurred in   $i$, at most one transition has occurred in $i+1$, and at most one transition has occurred in $i - D+1$. For $i \geq 1$, we get $\sigma_i \leq  \pi(N)_i $. On the other hand, since $\pi(N)$ is a fixed point, we have:   $\pi(N)_i < D $,  which guarantees that $s_{t+1} \neq i$. For $i = 0 $, there is no possible transition in $i - D+1$, thus we get $\sigma_0 \leq  \pi(N)^{\downarrow 0}_0 - D + D-1 $, which is  $\sigma_0 \leq  \pi(N)_0 +1 - D + D-1$. Thus    $\sigma_0 \leq  \pi(N)_0 < D $ which also gives: $s_{t+1}  \neq 0$.\\
  This ensures that the result is true for $t+1$,  and,  by induction,  for $T$. 
\end{proof}

When talking about an avalanche $s$, lemma \ref{lemma:01} allows us to write $i \in s$ instead of $|s|_i=1$ without lose of information. We denote by  $s_{[u, v]}$  the  subsequence of $s$ from  $u$ to $v$ included.

We will now study avalanches in details. For $D = 2$, i.e.  the classical SPM, avalanches are quite simple, the added grain moves rightwards until it finds a stable position. For $D > 2$, the situation is more complex, and needs a precise study, given by the following lemma.

\begin{lemma}\label{lemma:localdensity}
Let $s=(s_1,\dots,s_{t_k})$ be the $k^{th}$ avalanche. Let $r_t = max \{s_{t'}, t' \leq t \}$.
\begin{itemize}
\item Assume that $s_{t+1} < r_t$.  Then $s_{t+1}$  is the largest column number $i$ such that $i < r_t$ and $i \notin s_{[1,t]}$. Moreover,  we have: $r_t - s_{t+1} < D-1$.
\item Assume that $s_{t+1} > r_t$. We have:  $s_{t+1} -r_t  \leq  D-1$.
\end{itemize}

\end{lemma}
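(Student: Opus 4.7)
My plan is to handle the two cases separately, proving Case~2 first by a direct value analysis and then deducing Case~1 from it by analyzing the ``downward walk'' that follows every firing that pushes $r$ strictly upward.

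For Case~2, I compute the value of column $c$ at time $t$ as $\pi^{\downarrow 0}_c + (D-1)[c{+}1 \in s_{[1,t]}] - D\,[c \in s_{[1,t]}] + [c{-}D{+}1 \in s_{[1,t]}]$, which just tallies the contributions of each firing performed so far. When $c > r_t$, both $c$ and $c+1$ exceed $r_t = \max s_{[1,t]}$, so these indicators vanish and the value reduces to $\pi_c + [c{-}D{+}1 \in s_{[1,t]}]$. Since $\pi_c \leq D-1$, fireability forces $\pi_c = D-1$ and $c-D+1 \in s_{[1,t]}$; in particular $c - D + 1 \leq r_t$, giving $s_{t+1} \leq r_t + D - 1$.

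For Case~1, I set $R = r_t$, let $\tau$ denote the unique step at which $R$ was fired, and put $R' = r_{\tau-1}$. The central claim is that the firings after step $\tau$ which remain below $R$ form the strictly decreasing sequence $s_{\tau+k} = R - k$; I prove this by induction on $k$. A firing of column $j$ affects only columns $j-1$, $j$, and $j+D-1$, so after walk firings $R, R-1, \ldots, R-(k-1)$ the only columns whose value differs from time $\tau - 1$ lie in $[R-k, R-1]$ (left effects), the walk's own fired columns, or $[R+1, R+D-1]$ (right effects). Every unfired $c < R - k$ therefore still has the same value as at time $\tau - 1$, where $R$ was leftmost fireable, and so $c$ is still not fireable at step $\tau+k-1$. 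Thus the leftmost fireable column below $R$ at that step is $R - k$ itself when fireable; otherwise the next firing exceeds $R$ or the avalanche halts, both ruled out by the Case~1 hypothesis, forcing $s_{\tau+k} = R - k$.

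To conclude, I bound the walk length using Proposition~\ref{lemma:01} and Case~2. When $k$ reaches $R - R'$, the candidate column $R - k = R' \in s_{[1,\tau-1]}$ cannot be refired, since Proposition~\ref{lemma:01}'s proof pins $\sigma_{R'} \leq \pi^{\downarrow 0}_{R'} < D$ permanently after its unique firing; hence the walk terminates by some step $k^* \leq R - R'$. Applying Case~2 (already proven) to the jump at step $\tau$ yields $R - R' \leq D - 1$, so $k^* \leq D - 1$. Consequently, whenever Case~1 holds at time $t$ we have $s_{t+1} = R - (t{+}1{-}\tau)$ with $1 \leq t+1-\tau \leq k^* - 1 \leq D - 2$, proving $r_t - s_{t+1} < D - 1$. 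Moreover the columns in the open interval $(s_{t+1}, R)$ are precisely $\{R-1, R-2, \ldots, R-(t-\tau)\}$, all walk-fired and hence in $s_{[1,t]}$, while $s_{t+1} = R - (t+1-\tau) > R' = \max s_{[1,\tau-1]}$ lies outside $s_{[1,\tau-1]}$ and has not yet been walked, so $s_{t+1}$ is the largest element of $\{0, 1, \ldots, R-1\} \setminus s_{[1,t]}$. The main technical subtlety is the walk-structure claim in the second paragraph---the effect-window analysis is what rules out the possibility of a column far to the left of $R-k$ spontaneously becoming fireable and breaking the monotone decreasing pattern.
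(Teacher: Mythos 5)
Your proof is correct, but it follows a genuinely different route from the paper's. The paper argues through a causality order: it declares $i+1$ and $i-D+1$ to be the potential predecessors of a fired column $i$, shows via a postponement (commutation) argument that $r_t$ must be an ancestor of $s_{t+1}$ in a leftmost strategy, and then analyzes the shape of a causal chain from $r_t$ to $s_{t+1}$ to get both the ``largest unfired column'' claim and the bound $r_t-s_{t+1}<D-1$; the second item is read off from the fact that a column jumping beyond $r_t$ has $s_{t+1}-D+1$ as its only possible predecessor. You instead avoid the causal order entirely: you prove the second item by a direct value tally (using Proposition \ref{lemma:01} so that indicator variables suffice), and for the first item you prove by induction, using only leftmostness, locality of the rule, and the no-refiring property, that the steps immediately following the firing of $R=r_t$ are exactly the descending walk $R-1,R-2,\dots$ as long as they stay below $R$; the length bound then comes from applying your second item to the jump at the previous peak, giving $t+1-\tau\le R-r_{\tau-1}-1\le D-2$. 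What your approach buys is elementarity and a strictly stronger structural output (the exact temporal order of firings after a peak), which in fact anticipates the content of Lemma \ref{lemma:D-1} and Theorem \ref{corollary:peak}; what the paper's approach buys is a reusable causality/exchange tool natural in the chip-firing setting, at the price of a rather terse argument for the bound. Minor quibbles only: the right-effect window is $[R-k+D,\,R+D-1]$ rather than $[R+1,\,R+D-1]$ (harmless, since all effects still lie at or above $R-k$, which is all you use); the non-refiring of $R'$ follows already from the statement of Proposition \ref{lemma:01}, so there is no need to appeal to its proof; and the degenerate case $\tau=1$ (i.e.\ $R=0$) makes Case~1 vacuous and deserves a passing word.
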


\begin{proof}
We first order fired columns by causality. Precisely, a column $i$ has  two potential predecessors, which are
$i +1$ and   $ i - D+1$. State $i = s_u$. These columns are really predecessors of $i$ if they are elements of $s_{[1, u ]}$, i.e if they are fired before $i$. By this way, using the transitive closure,  we define a partial order relation (denoted by $<_{caus.}$),  on fired columns for $s$.

Now,  consider the set $A_{t+1 }$ of ancestors of $s_{t+1 }$ (i.e.  the set of columns $i$ such that $i <_{caus.} s_{t+1}$)  and,    the set $S_t$ of which have  $s_t$ as common ancestor (i.e. columns $i$ such that $s_t  <_{caus.} i$).
We necessarily have  $r_t \in A_{t+1 }$.   Otherwise, we have  $A_{t+1 } \cap S_t =  \emptyset$, and this  allows  another strategy $s'$, constructed from $s$ postponing the transitions at  $r_t$ and elements of $S_t$ after the transition on  $s_{t+1}$. This contradicts the fact that $s$ is leftmost.

Let  $(i_0, i_1, ...i_p)$ be a finite sequence such that $i_0 = r_t$, $i_p = s_{t+1}$ and,  for each $j$ with $0 \leq j < p$, $i_j$ is a predecessor of $i_{j+1}$. Such a sequence exists since $r_t \in A_{t+1 }$. One easily proves by induction that $i_j = r_t -j$: 
this is true for $i = 0$. Assume it is true until the integer $j < p$. We have either $i_{j+1} = i_j -1$ or  $i_{j+1} = i_j +D -1$. But from the induction hypothesis, $i_j +D -1$ is an ancestor of $i_{j+1}$ or has not yet been fired, thus $i_{j+1} = i_j -1$.  This gives that   $s_{t+1}$  is the largest column number $i$ such that $i < r_t$ and $i \notin s_{[1,t]}$. 

Now if we assume, by contradiction, that  $p \geq  D-1$, then  $r_t -D+1$ is not a predecessor of $r_t$, which yields that $r_t$ has no predecessor, which is a contradiction. This gives the  inequality   of the first item. 
The second item  is obvious, since $s_{t+1}$ ha a unique predecessor which is $s_{t+1}-D+1$. 
\end{proof}

Lemma \ref{lemma:localdensity} induces a partition of fired columns between those which make a progress ({\em i.e.} increases the greatest fired column) and those which do not. This distinction is important in further development, so let us give progress firings a name. Let $s=(s_1,\dots,s_T)$ be an avalanche, a column $s_{t}$ is called a {\em peak} if and only if $s_{t} > \max s_{[1,t-1]}$.

\begin{remark}\label{remark:order}
  Two peaks $p\not =q$ can be compared using chronological ($<_T$) or spatial ($<_S$) orders. Nevertheless, by definition of peaks we obviously have $p <_T q \iff p <_S q$.
\end{remark}

\begin{lemma}\label{lemma:D-1}
  Let $s$ be the $k^{th}$ avalanche. Assume that  there exists a column  $l$,  such that for each column  $i$ with $l \leq i < l+D-1$, $i \in s$, and a column $i'$ such that $i'\geq l+D-1$ and $i' \in s$. Let  $l'$ be  the lowest peak such that  $l' \geq l+D-1$.
  
There exists a time $t$  such that:

 \begin{tabular}[t]\{{l}. 
 for all $i$ with $l' -D+1 < i \leq  l'$, $i \in s_{[1,t]}$\\
  for all $i$ with $l'< i$, $i \notin s_{[1,t]}$
   \end{tabular}
   
  Moreover $l'$ is the lowest integer such that  $l' \geq l + D-1$ and $\sigma^t_{l'} = D-1$. 
\end{lemma}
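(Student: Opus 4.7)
The plan is to let $\tau$ denote the step at which the peak $l'$ fires, then to combine the leftmost property with Lemma \ref{lemma:localdensity} to show that the window $(l'-D+1,\,l']$ fills in completely before any column $>l'$ is ever touched. The candidate for $t$ is the step just after the last non-peak filling in that window.

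First I would exploit the minimality of $l'$ among peaks $\geq l+D-1$. Before $\tau$ the maximum fired index stays strictly below $l+D-1$, since by Lemma \ref{lemma:localdensity} a non-peak firing cannot exceed the current peak and, by minimality, no earlier peak reaches $l+D-1$. I would then derive the two facts forced by $l'$ being fireable at time $\tau$: $\pi(N)_{l'}=D-1$ and $l'-D+1\in s_{[1,\tau-1]}$. Indeed $\sigma_{l'}$ starts at $\pi(N)_{l'}\leq D-1$ and can grow only through a firing at $l'+1$ (excluded, since $l'$ is a peak and $l'+1$ is not yet fired) or at $l'-D+1$ (contributing $+1$), so reaching the threshold $D$ forces both of these conditions.

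Second I would iterate Lemma \ref{lemma:localdensity} on the steps following $\tau$. Each next firing is either a new peak in $[l'+1,\,l'+D-1]$ or the largest still-unfired column in $(l'-D+1,\,l'-1]$. By a decreasing induction on the column index I would argue that every $c\in(l'-D+1,\,l'-1]$ not already in $s_{[1,\tau]}$ becomes fireable before any candidate new peak, the ``fuel'' being the $+(D-1)$ that $c$ picks up either from the firing of $l'$ (if $c=l'-1$) or from its right neighbour being fired as a non-peak, together with at most $+1$ from a firing at $c-D+1$, which is guaranteed in several cases by the hypothesis $\{l,\dots,l+D-2\}\subseteq s$. The leftmost rule then selects these fillings in priority over any peak $>l'$, so choosing $t$ to be the step just after the last such filling establishes the two displayed conditions (one has $t=\tau$ exactly when $l'=l+D-1$). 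For the concluding characterization of $l'$ as the smallest $i\geq l+D-1$ with $\sigma^t_i=D-1$, I would expand $\sigma^t_i$ as $\pi(N)^{\downarrow 0}_i$ corrected by the contributions of the firings in $s_{[1,t]}$ that touch column $i$; since this list of firings is explicit from the analysis above, the statement reduces to a short residue computation index by index.

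The main obstacle I expect is the inductive step just sketched: one must track precisely which columns in $\{l,\dots,l+D-2\}$ serve as the ``$+1$ predecessor'' for each intermediate index in $(l'-D+1,\,l'-1]$, and simultaneously rule out the leftmost rule electing a new peak ahead of these non-peak fillings. Once this case analysis is cleanly organized, the characterization of $l'$ via the condition $\sigma^t_{l'}=D-1$ falls out by routine local bookkeeping.
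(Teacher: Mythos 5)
There is a genuine gap, and it sits exactly where you flag your ``main obstacle'': the claim that every still-unfired column $c\in(l'-D+1,\,l'-1]$ becomes fireable before any new peak cannot be established by the fuel accounting you sketch. The only possible gains for such a $c$ are $D-1$ from $c+1$ and $+1$ from $c-D+1$, and since (by the distance bound of Lemma~\ref{lemma:localdensity}) no column $\leq l'-D+1$ is ever fired after the peak $l'$, that $+1$ must have arrived \emph{before} $\tau$ or never. So if $\pi(k-1)_c=0$ and $c-D+1$ is not fired before $\tau$ --- e.g.\ $c-D+1=l-1<l$ when $l'=l+D-1$, a case the hypothesis $\{l,\dots,l+D-2\}\subseteq s$ does not cover, or more generally when $c-D+1\notin s$ --- then $c$ tops out at $D-1$ and is simply not fireable; no local bookkeeping rules this scenario out. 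What rules it out is a global, contrapositive use of the hypothesis, and that is precisely the pivot of the paper's proof: it takes $t=t_0+j$, the end of the maximal decreasing run following the peak, shows that any column $i<l'$ unfired at time $t$ has its value frozen forever (all later firings are $>l'$ or excluded by leftmostness and Lemma~\ref{lemma:localdensity}), hence ``unfired at $t$'' implies ``never fired''; then it shows all columns of $(l'-D+1,l']$ are fired \emph{somewhere} in $s$ --- by hypothesis for those in $[l,l+D-2]$, and by a reverse induction for those in $[l+D-1,l']$ (if $i+1\in s$ and $i-D+1\in s$ then $i$ receives $D$ units and must fire, else the final configuration is not stable) --- and concludes by contraposition that they lie in $s_{[1,t]}$. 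In other words, the fireability you want to prove directly is a \emph{consequence} of the hypothesis used in this indirect way, not something you can derive beforehand by tracking $+1$ predecessors.

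Your opening observations are fine and match part of the paper's conclusion (before $\tau$ the maximum fired column stays below $l+D-1$; firing $l'$ forces $\pi(k-1)_{l'}=D-1$ and $l'-D+1\in s_{[1,\tau-1]}$). But the final minimality claim is also not a ``routine residue computation'': to exclude a smaller candidate $l''\in[l+D-1,l')$ with value $D-1$ you again need the hypothesis, namely that $l''-D+1\in[l,l+D-2]$ is fired, and fired before $t_0$, so that $l''$ would have been fireable earlier and leftmostness would have fired it before $l'$ --- a contradiction. As it stands, the proposal asserts the lemma's core rather than proving it, and the mechanism proposed to prove it would fail without the missing contrapositive ingredient.
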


\begin{proof}

 Let $t_0$ be the time when $s_{t_0} = l'$, i.e. the   first  time such that $s_{t_0} \geq l+D-1$,  and let $j$ be the largest integer such that,  for $0 \leq j' \leq j$, we have $s_{t_0 +j'} = s_{t_0} -j'$.  Let us state $t = t_0 +j$. We have $j < D-1$. \\
Let $\sigma^t$ denote the configuration obtained from $\pi(k-1)$ via $s_{[1,t]}$. Let   $i$, with $ i < l'$,  such that $i \notin s_{[1,t]}$. We claim that we have :  $i \notin s$. To prove it, we prove by induction that for any $t' \geq t$,  $i \notin s_{[1,t']}$. Assume that this is satisfied for a fixed $t'$. This means that all the transitions of $s_{[t+1,t']} $  are done on columns larger than $l'$.  Thus, $ \sigma^{t'}_i =   \sigma^{t}_i$ and no transition is possible  on $i$ for $\sigma^{t}$ since  $s$ is leftmost (the only potential column to be fired is $ s_{t_0} -j -1$, but by assumption, either this column has been previously fired,  or it cannot be fired by definition of $j$, according to lemma \ref{lemma:localdensity}) . \\
By contraposition,  it follows that for each column  $i$ with $l \leq i < l+D-1$,  we have $i \in s_{[1,t]}$.  A   simple (reverse sense) induction shows that,   for $ l+D-1 \leq i \leq l'$   we have $i \in s$, since  by hypothesis $i+1$, and $i+1 -D$ both are in $s$.  Thus, by contraposition of the claim above,  for $ l+D-1 \leq i \leq l'$,   we have $i \in s_{[1,t]}$. 
This gives  the the fact that for all $i$ with $l' -D+1 < i \leq  l'$, $i \in s_{[1,t]}$\\
The fact that for all $i$ with $l'< i$, $i \notin s_{[1,t]}$ is trivial,  by definition of $t_0$ and $t$.\\
 We have $l' > l + D-2$ and $\sigma^t_{l'} = D-1$. assume that there exists $l''< l'$ satisfying the same properties. Notice that 
 for $t_0 \leq  t' \leq t$   we have $s_t' > l' - D-1$. Thus the time $t_1$ such that $s_{t_1} = l'' -D+1$  is such that $t_1 < t_0$. 
  That means that $ l'' $  should have been fired before $t_0$,  a contradiction.
\end{proof}

Lemma \ref{lemma:D-1} describes in a very simple way the behavior of avalanches. Thank to it, the study of an avalanche can be turned into a pseudo linear execution, in which transitions are organized in a clear fashion:

\begin{theorem}\label{corollary:peak}
  Let $s=(s_1,\dots,s_T)$ be the $k^{th}$ avalanche and $(p_1\dots,p_q)$ be its sequence of peaks. Assume that  there exists a column $l$,  such that for each column  $i$ with $l \leq i < l+D-1$, $i \in s$. 
  Then for any column $p$ such that $p \geq l+D-1$, 
  $$p\text{ is a peak of }  s  \iff  p \leq p_q +D-1\text{ and }\pi(k-1)_p = D-1$$
  
  Furthermore, Let $p_i=s_t$, with $p_i \geq l+D-1$, be a peak. Then
  $$T \geq t+p_i-p_{i-1}-1 \text{ and for all } t' \text{ s.t. } t < t' \leq t+p_i-p_{i-1}-1,~ s_{t'}=s_{t'-1}-1$$

 \end{theorem}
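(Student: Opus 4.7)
The plan is to iteratively apply Lemma~\ref{lemma:D-1}. The hypothesis provides a block of $D-1$ consecutive fired columns $\{l, l+1, \ldots, l+D-2\}$, which is precisely the setup needed to identify the first peak $p_{q'}$ lying in the region $\geq l+D-1$. After $p_{q'}$ fires together with its descending sub-sequence, Lemma~\ref{lemma:D-1} asserts that a fresh $(D-1)$-block $\{p_{q'}-D+2, \ldots, p_{q'}\}$ now sits in the fired set, so I may re-apply the lemma with this new block. Iterating produces the entire list of right-region peaks $p_{q'}, p_{q'+1}, \ldots, p_q$ together with the descending segments between them.

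Concretely, I would first translate the $\sigma^t$-condition appearing in Lemma~\ref{lemma:D-1} into the clean initial-state condition $\pi(k-1)_p = D-1$. Just before a peak $p$ fires, no column strictly greater than the previous peak has been fired, and the only transition that can alter $\sigma_p$ is the firing of its causal predecessor $p-D+1$ (which contributes $+1$). Hence the firing condition $\sigma_p = D$ reduces to $\pi(k-1)_p = D-1$. This simultaneously yields the bound $p_i - p_{i-1} \leq D-1$ for consecutive right-region peaks, since the causal predecessor $p_i - D + 1$ must already lie in the $(D-1)$-block produced by $p_{i-1}$'s descending sub-sequence. The forward direction of the equivalence is then immediate: any peak $p$ satisfies $\pi(k-1)_p = D-1$, and $p \leq p_q$ trivially gives $p \leq p_q + D - 1$.

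For the converse, suppose $p \geq l+D-1$ satisfies $\pi(k-1)_p = D-1$ and $p \leq p_q + D - 1$. If $p \leq p_q$ and $p$ were not a peak, then $p$ would lie strictly between two consecutive right-region peaks $p_{q'+j} < p < p_{q'+j+1}$, contradicting the minimality clause in Lemma~\ref{lemma:D-1} that $p_{q'+j+1}$ is the \emph{smallest} column above the current block meeting the condition. If $p_q < p \leq p_q + D-1$, the same analysis applied to the block $\{p_q - D + 2, \ldots, p_q\}$ would force $p$ to fire as a new peak, contradicting that $p_q$ is the last peak. The second part of the theorem (the descending sequence $p_i-1, \ldots, p_{i-1}+1$ following $p_i$) now follows from Lemma~\ref{lemma:localdensity}: right after the peak, $r_t = p_i$, and the leftmost rule forces the next firings to be the largest unfired columns below $p_i$, producing a strictly descending sequence until it reaches the already-fired column $p_{i-1}$.

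The main obstacle is the careful bookkeeping through the iteration: one must verify at each stage that the hypotheses of Lemma~\ref{lemma:D-1} are satisfied (the newly formed $(D-1)$-block really consists of fired columns, and at least one further-right column is fired whenever another peak exists), and that the translation between the local condition $\sigma^t_p = D-1$ and the global condition $\pi(k-1)_p = D-1$ is faithful. This last point hinges on the fact that no firing in the right region other than the causal predecessor $p-D+1$ perturbs $\sigma_p$ before $p$ is selected as a candidate peak, so that the initial difference $\pi(k-1)_p$ and the pre-firing value $\sigma_p$ differ by exactly the $+1$ from the predecessor.
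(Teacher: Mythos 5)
Your proposal is correct and takes essentially the same route as the paper: the paper proves the first part by exactly the ``straight induction on Lemma~\ref{lemma:D-1}'' that you carry out (re-applying the lemma to the fresh $(D-1)$-block $\{p-D+2,\dots,p\}$ created at each peak), and your explicit translation of the lemma's local condition into $\pi(k-1)_p = D-1$ (only the causal predecessor $p-D+1$ can have touched $\sigma_p$ before $p$ fires, contributing $+1$) fills in a step the paper leaves implicit. For the second part the paper phrases the argument as a causality/waiting fact (a column $i$ with $\pi(k-1)_i < D-1$ fires once $i+1$ and $i-D+1$ have), whereas you combine Lemma~\ref{lemma:localdensity}, leftmostness, and the completed block guaranteed by Lemma~\ref{lemma:D-1}; both give the same forced descending run, so the difference is inessential.
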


 A graphical representation of this statement is given on figure \ref{fig:peak}.

\begin{proof}
  The first part  is a straight induction on lemma \ref{lemma:D-1}.\\
  The second part follows an induction summed up in the following fact: any column $i$ such that $\pi(k-1)_i < D-1$ must wait for its right neighbor $i+1$ to be fired, and it should be fired when both $i+1$ and $i-D+1$ has been fired (besides, $i-D+1$ has already been fired). Since any of such $i$ is fired to reach a fixed point, $T \geq t+p_i - p_{i-1} -1$.
\end{proof}

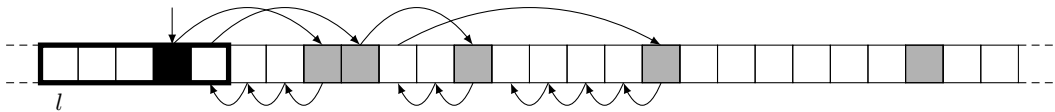
\begin{figure}
\begin{tikzpicture}
  \filldraw[fill=black] (3*.5,0) rectangle ++ (.5,.5);
  \foreach \x in {7,8,11,16,23}
    \filldraw[fill=black!30] (\x*.5,0) rectangle ++ (.5,.5);
  \foreach \x in {0,...,25}
    \draw (\x*.5,0) rectangle ++ (.5,.5);
  \draw[densely dashed] (13,0) -- ++ (.5,0);
  \draw[densely dashed] (13,.5) -- ++ (.5,0);
  \draw[densely dashed] (0,0) -- ++ (-.5,0);
  \draw[densely dashed] (0,.5) -- ++ (-.5,0);
  \draw[line width=2pt] (0,0) -- ++ (2.5,0) -- ++ (0,.5) -- ++ (-2.5,0) -- cycle;
  \node at (.25,-.25) {$l$};
  \draw[-latex] (3*.5+.25,1) -- ++ (0,-.5);
  \draw[-latex] (3*.5+.25,.5) parabola[bend pos=0.5] bend +(0,.5) +(4*.5,0);
  \draw[-latex] (4*.5+.25,.5) parabola[bend pos=0.5] bend +(0,.5) +(4*.5,0);
  \draw[-latex] (8*.5+.25,.5) parabola[bend pos=0.5] bend +(0,.5) +(3*.5,0);
  \draw[-latex] (9*.5+.25,.5) parabola[bend pos=0.5] bend +(0,.5) +(7*.5,0);
  \foreach \x in {5,6,7,10,11,13,14,15,16}
    \draw[-latex] (\x*.5+.25,0) parabola[bend pos=0.5] bend +(0,-.3) +(-.5,0);
\end{tikzpicture}
\caption{Illustration of Theorem \ref{corollary:peak} with $D=6$. Surrounded columns $l$ to $l+D-2$ are supposed to be fired. Black column is the greatest peak strictly lower than $L+D-1$. A column is grey if and only if its value is $D-1$. Following arrows depicts the avalanche.}
\label{fig:peak}
\end{figure}

  In the light of remark \ref{remark:order}, theorem \ref{corollary:peak} easily allows us to compute the right part of $k^{th}$ avalanche (from column $l + D-1$), only knowing $\pi(k-1)$. The sequence of peaks is computed as follows. The first one is the lowest column $i$ greater or equal to $l+D-1$ such that $\pi(k-1)_i=D-1$. Then, given a peak $i$, the next one is the lowest  $j$ such that $\pi(k-1)_j = D-1$  and $j- i \leq D-1$. If such a $j$ does not exist, then there is no more peak and $i$ is the largest fired column.

  We can distinguish two movements within an avalanche: before a certain column it has an unknown behavior, and from that column to the end the behavior is pseudo local, in the sense that when an index is fired ahead (on the right) then any `hole' is filled before the progress can continue.

An important direct implication of theorem \ref{corollary:peak} is that if there exists a column $l$ such that for the $k^{th}$ avalanche $s^k$, we have for all $l \leq i < l+D-1$, $i \in s$, then for all $j$ such that $l+D-1 \leq j < \max s^k$, we have $j-(D-1), j, j+1 \in s^k$ and therefore $\pi(k)_j=\pi(k-1)_j$. Intuitively, this equality hints some similarity between successive avalanches.

Note that previous results also apply for a grain addition on column 0 of any fixed point configuration of KSPM($D$).

This study constitutes a simplified understanding of the behavior of avalanches, which we hope will be helpful toward the description of fixed points. As motivated above, next subsection studies, for KSPM(3),  the previous result hypothesis  that for an avalanche $s$ there exists a column $l$ such that for all $l \leq i < l+D-1$, $i$ is element of $s$.\\
    
\section{Short transitional phase when $D=3$}\label{section:bounding}

In this section we prove that in KSPM(3), there exists a column $l(N)$ in $O(\log N)$ such that lemmas hypothesis is verified for any avalanche $s^k$, with $k \leq N$, such that $\max s^k > l(k)$. In other words, considering the $N$ first avalanches, from a logarithmic column,  we can apply theorem \ref{corollary:peak} and consider avalanches pseudo locally, as described on figure \ref{fig:peak}. Here is the statement:

\begin{proposition}\label{lemma:meta2}
Let $s$ be the $k^{th}$ avalanche of KSPM(3). There exists a column $l(k)$ in $O(log~ k)$ such that for any $k$, when  $\max\{j \vert j \in s \} > l(k)$, $l(k)$ and $l(k)+1$ both are elements of $s$.
\end{proposition}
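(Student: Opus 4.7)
The plan is to reduce the proposition to a quantitative bound on the prefix of $\pi(k-1)$. The key observation is that an avalanche $s$ without any pair of consecutive fired columns must be of a very rigid form, which forces a specific ``staircase'' pattern in the prefix of $\pi(k-1)$; I then bound the maximum length of such a staircase by $O(\log k)$ using the iterative structure of the grain-addition process.

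First I would use Lemmas~\ref{lemma:localdensity} and~\ref{lemma:D-1} to characterize avalanches without a consecutive fired pair. By Lemma~\ref{lemma:localdensity}, consecutive peaks $p_i < p_{i+1}$ of $s$ satisfy $p_{i+1} - p_i \in \{1, 2\}$: if $p_{i+1} = p_i + 1$ the pair $(p_i, p_{i+1})$ is immediately present, while if $p_{i+1} = p_i + 2$ any subsequent non-peak firing must occur at $p_{i+1} - 1$ (again by Lemma~\ref{lemma:localdensity}), producing the pair $(p_{i+1} - 1, p_{i+1})$. Hence, in the absence of a consecutive pair, $s$ must consist purely of peaks $\{0, 2, 4, \ldots, 2m\}$ spaced by exactly $2$ with no non-peak firings at all. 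A direct inspection of the firing rule then shows this forces the prefix of $\pi(k-1)$ to be the staircase $\sigma_{2i} = 2$ for $0 \leq i \leq m$ and $\sigma_{2i-1} = 0$ for $1 \leq i \leq m$, together with $\sigma_{2m+2} \leq 1$; in particular the maximum fired column equals exactly $2m$, so bounding $2m$ as a function of $k$ produces the desired $l(k)$.

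The core of the proof is then to bound the maximum staircase length $2m$ appearing as the prefix of any $\pi(k-1)$ in the iterated sequence by $O(\log k)$. I would proceed by strong induction on $k$, tracking how the first few columns of $\pi(k)$ transform under each grain addition. The aim is to exhibit a potential function on prefixes (or equivalently an explicit encoding of the admissible heads as states of a slowly advancing counter) such that extending the staircase from length $2m$ to $2(m+1)$ requires a geometric factor increase in the number of grain additions, thereby yielding the logarithmic bound. The main obstacle is precisely this step: because the KSPM(3) firing rule is asymmetric, sending two grains backward and only one forward, the prefix of $\pi(k)$ does not evolve in an obvious base-$3$ counter fashion, and identifying the correct invariant demands a careful case analysis of how cascading firings at distant columns feed back into the early positions during successive avalanches. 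Once the logarithmic bound on $2m$ is in hand, setting $l(k)$ equal to it concludes the proof: if $\max\{j : j \in s\} > l(k)$ then $s$ cannot be of the pure-peak form described above, so by the characterization it must contain a consecutive fired pair at some position at most $l(k)$.
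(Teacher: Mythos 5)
Your first half matches the paper's own argument: using Lemma~\ref{lemma:localdensity} (for $D=3$, non-peak firings occur at $r_t-1$ and forward jumps are at most $2$), an avalanche with no consecutive fired pair must fire exactly the even columns $0,2,\dots,2m$, and the firing rule then forces $\pi(k-1)$ to begin with the alternating prefix $\sigma_{2i}=2$, $\sigma_{2i+1}=0$. This reduction is correct and is essentially the paper's first step.

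However, the heart of the proposition is the quantitative claim that a fixed point $\pi(N)$ can only have the prefix $(2,0)^m$ when $N$ is exponentially large in $m$, and this is exactly the step you do not supply. You state an intention to find ``a potential function'' or ``a slowly advancing counter'' and you explicitly identify this as the main obstacle, but no invariant, no case analysis, and no bound is actually produced; as written, the $O(\log k)$ estimate is asserted, not proved. The paper closes this gap by a global argument on the shot vector $a$ rather than by tracking how the prefix evolves under successive additions: from $\sigma_i=a_{i-2}-3a_i+2a_{i+1}$ one gets a linear recurrence $u_{i+1}=Au_i+v_i$ for $u_i^T=(a_{i-2},a_{i-1},a_i)$, where $A$ has eigenvalues $1$ (twice) and $-1/2$; projecting onto the $-1/2$-eigendirection and iterating over the $m$ blocks $(2,0)$ yields an identity of the form $\frac{N+a_0+2/27}{9}=4^m x_m$, and the discreteness lemma (Lemma~\ref{lemma:alpha}) bounds $x_m$ below by a positive constant independent of $m$, giving $N\geq c\,4^m$. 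Your proposed induction on $k$ would have to control how distant cascades feed back into the first columns across many avalanches, which is precisely the difficulty the paper's shot-vector computation sidesteps; without either that computation or a concrete substitute invariant, the proof is incomplete at its critical point.
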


\begin{proof}

Let $i$ be a fixed column. If $i,i+1 \in s^k$, then, theorem \ref{corollary:peak} states that for all $i'$ such that $i \leq i' < \max\{j \vert j \in s \}$, we have $i',i'+1 \in s$. If $j,j+1 \notin s$, then, from Proposition \ref{lemma:localdensity},  we have $\max\{j \vert j \in s^k \} < i$.

 Let $j$ be a fixed positive integer. Assume  that  the avalanche $s$ fires $2j $ but not $2j-1$. From remarks above, columns $0, 2, 4,..., 2j$ are fired in $s$ while  columns $1, 3, 5,..., 2j-1$ are not fired.

If a column $i+1$ is fired while $i$ is not, then we necessarily have $\pi(k-1)_i = 0$,  since the firing in $i+1$ increases the value in column $i$ from 2 units. 
Moreover, if the  column $i+1$ is fired while $i+2 $ is not, then we necessarily have $\pi(k-1)_{i +1}= 2$, since the $i+1$  receives at most one grain, by preceding firings. 

On the other hand,  obviously, the assumption on $j$ enforces that  $\pi(k-1)_0 = 2$ 
This yields that $(2,0)^{j-2}$ is a prefix  of $\pi(k-1)$. 

We have the following fact : \\ 

\textbf{Fact:} 
There exists constant numbers $A$ and $B$, with $A >0$ such that if a  configuration  $ \pi(N)$ has a prefix of the form $(2,0)^{j}$ then $N > A 4^j + B$\\
 
 This is obtained by  the  linear algebra analysis below. This gives the result of the proposition.

 Let $ \pi(N)  = (\sigma_0,\sigma_1,\dots)$ be the  configuration and $a=(a_0,a_1,\dots)$ be its 
 {\em shot vector} i.e.   the sequence  $a=(a_0,a_1,a_2,\dots)$ where $a_i$ is the number of times the column  $i$ has be fired in the $N$ first avalanches. According to the iteration rule we have the relation:
 
 $$\sigma_i=a_{i-2}-3a_i+2a_{i+1}$$
 i.e.   $$a_{i+1}=\frac{1}{2}(\sigma_i-a_{i-2}+3a_i)$$

We state  $A=\begin{pmatrix} 0 & 1 & 0\\ 0 & 0 & 1\\ -1/2 & 0 & 3/2 \end{pmatrix}$. We denote  by $v_i$ the column vector such that  and $v_i^T = ( 0,   0,  \sigma_i/2 ),$
and    $u_i$ the column vector such that $u_i^T =(a_{i-2},a_{i-1},a_i)$ (with the convention that  $u^T$ is  the row vector   obtained by transposition of the column vector $u$) . The equality above can be algebraically written in

 $$u_{i+1}=A u_i + v_i$$

By
iteration we get :

$$u_{i+2}=A^2u_i+A v_i + v_{i+1}$$

What we want is $\sigma_{2i}=2$ and $\sigma_{2i+1}=0$ so $v_{2i}^T=(0,0,1)$ and $v_{2i+1}^T=(0,0,0)$. With this specification, we get : 
$$ u_{2(i+1)}=A^2u_{2i}  + b $$

 with $b^T =  (0,1,3/2)$.
From this last relation we will deduce a condition on $N$ to get the sequence $(2,0)^j$.\\

Let us first find a new basis to get the matrix $A$ on Jordan canonical form. The characteristic polynomial of $A$ is $\frac{1}{2}(2x+1)(x-1)^2$ and its eigenvalues are $-\frac{1}{2}$ of algebraic multiplicity 1 and 1 of algebraic multiplicity 2. Since $\dim ( \ker (A-Id))=1$ the Jordan canonical form of A is
\[
 A_{Jordan} =\begin{pmatrix} 1 & 1 & 0\\ 0 & 1 & 0\\ 0 & 0 & -1/2 \end{pmatrix}
\]
And the new basis $E'=(e'_1,e'_2,e'_3)$ according to the canonical one $E=(e_1^T=(1,0,0), e_2^T=(0,1,0), e_3^T=(0,0,1))$ is given by the linear relations:
\begin{center}
 \begin{tabular}{l}
   $e'_1=e_1+e_2+e_3$\\
   $e'_2=e_2+2e_3$\\
   $e'_3=4e_1-2e_2+e_3$
 \end{tabular}
\end{center}

Let $p$ be  the
 linear mapping consisting in the projection on the line $D_3$ generated  by  $e'_3$ according to the direction the plane $P_{1, 2}$¬¨‚Ä† generated by $e'_1$ and $e'_2$. For any vector $u$, we have:  $p(A^2u) = p(A^2(p(u) + u -p(u))  = p(A^2(p(u)) +  p(A^2 (u -p(u))$. Notice that, by definition of $p$, $u -p(u) $ is element of $P_{1, 2}$, and,  therefore $A(u -p(u))$ and $A^2(u -p(u))$ also are elements of $P_{1, 2}$. This yields that $p(A^2 (u -p(u)))$ is null.
 On the other hand, since  $p(u) $ is element of $D_{3 }$, $A(p(u))  = \frac{-1}{2 }p(u)$ and $A^2(p(u))  = \frac{1}{4 }p(u)$;  thus  $p(A^2(p(u)) =  \frac{1}{4}p(u)$. As a conclusion, we get $p(A^2u) = \frac{1}{4}p(u)$, which, in particular, allows the following equalities:

\begin{center}
 \begin{tabular}{rl}
   $p(u_{2(i+1)})$ & $= p(A^2u_{2i}+b)$\\
    & $= p(A^2u_{2i}) + p(b)$\\
    & $= \frac{1}{4}p(u_{2i}) - \frac{1}{18}e'_3$
 \end{tabular}
\end{center}
Let $v$ be the unique vector collinear with $e'_3$ satisfying the equation
\[
 v=\frac{1}{4}v - \frac{1}{18}e'_3
\]
i. e.  $v=  \frac{-2}{27}e'_3$.  Remember that $p(v) = v$. We have
\begin{center}
 \begin{tabular}{rl}
   $p(u_{2(i+1)}-v)$ & $= p(u_{2(i+1)}) - v$\\
    & $= \frac{1}{4}p(u_{2i}) - \frac{1}{18}e'_3 -(\frac{1}{4}v - \frac{1}{18}e'_3) )$\\
    & $= \frac{1}{4}p(u_{2i}-v)$

 \end{tabular}
\end{center}
This gives by induction: 

$$ p(u_0 - v)  = 4^j p(u_{2j}-v)$$

Now we specify the sequence of vectors $u_i$,  assuming that values $a_i$ are the shot vectors of a configuration $\sigma $ beginning by $(2,0)^j$.
 (For convention we also  state $a_{-2}=N$ and $a_{-1}=0$, thus we have $u_0^T = (N, 0, a_0)$, $u_1^T = (0, a_0, a_1)$ and $u_i^T =(a_{i-2},a_{i-1},a_i)$ for $i \geq 2$).

 An easy computation gives that:  
 $p(u_0 - v) = \frac{N+ a_0 + \frac{2}{27}}{9}e'_3 $

 Let $x_j$ be defined by $p(u_{2j}-v) = x_j e'_3$. We obtain   the equality :

$$ \frac{N+ a_0 + \frac{2}{27}}{9} =  4^j x_j$$

Obviously $a_0 \leq \frac{N}{D} = \frac{N}{3}$, which ensures that  $N+ a_0 + \frac{2}{27}\leq \frac{4N}{3}  +1$.

Furthermore,  we necessarily have $x_j >0$, and, from lemma \ref{lemma:alpha} proved on the bounce, each element of $p(\mathbb Z^3)$ is a multiple of $c e'_3$, where $c$ is a positive constant.
If $v$  is element of  $p(\mathbb Z^3)$ we can conclude that $x_j \geq c.$
If $v$  is not element of  $p(\mathbb Z^3)$,  we can conclude that $x_j \geq   min\{ \vert ck + \frac{2}{27}\vert , k \in \mathbb{Z}  \}$. In any case,  there exists a positive real $d$, not depending on $j$,   such that  $x_j \geq d.$

We conclude that $\frac{4N}{3}  +1 \geq  4^j d $, which gives $N>  \frac{3d }{4} \,4^j -\frac{3 }{4} $ to get a sand pile of the form $\sigma=(2,0)^j\sigma'$.
\end{proof}

We now give the lemma used in the previous proof. 

\begin{lemma}(constant steps)\label{lemma:alpha}
  There exists a  positive real $c$ such that $p_{e'_3}(\mathbb Z^3) = \{ i \, c  e'_3, i \in \mathbb Z \}$.
\end{lemma}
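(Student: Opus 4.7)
The plan is to make the projection $p_{e'_3}$ explicit by inverting the change-of-basis matrix. The relations $e'_1 = e_1 + e_2 + e_3$, $e'_2 = e_2 + 2 e_3$, $e'_3 = 4 e_1 - 2 e_2 + e_3$ give the matrix
\[ M = \begin{pmatrix} 1 & 0 & 4 \\ 1 & 1 & -2 \\ 1 & 2 & 1 \end{pmatrix} \]
whose columns are the coordinates of $e'_1, e'_2, e'_3$ in the canonical basis $E$. By the very definition of $p_{e'_3}$, a vector $u$ of coordinates $(\alpha_1, \alpha_2, \alpha_3)$ in $E'$ satisfies $p_{e'_3}(u) = \alpha_3\, e'_3$, and $\alpha_3$ is the third component of $M^{-1} u$.

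First I would compute $\det M$ by a $3\times 3$ cofactor expansion and then only the third row of $M^{-1}$ (no need for the full inverse). A direct calculation yields $\det M = 9$ and third row $\frac{1}{9}(1,\,-2,\,1)$, so that for every $u = (u_1, u_2, u_3)^T \in \mathbb Z^3$,
\[ p_{e'_3}(u) \;=\; \frac{u_1 - 2u_2 + u_3}{9}\, e'_3. \]

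It remains to identify the subgroup of $D_3$ thus obtained. The integer $u_1 - 2u_2 + u_3$ ranges through all of $\mathbb Z$ as $u$ varies in $\mathbb Z^3$, since its three coefficients have $\gcd$ equal to $1$ (for example, $u = (1,0,0)$ already yields $p_{e'_3}(u) = \tfrac{1}{9}\, e'_3$). Hence $p_{e'_3}(\mathbb Z^3) = \{\, i \cdot \tfrac{1}{9}\, e'_3 : i \in \mathbb Z \,\}$, and setting $c = 1/9$ concludes.

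I do not expect any real obstacle: the whole content is a routine linear-algebra verification, and the non-singularity of $M$ is automatic from the fact that $(e'_1, e'_2, e'_3)$ is a basis. The only thing to be mildly careful about is to invert $M$ in the right direction, since the paper's relations express the $e'_k$'s in terms of the $e_k$'s whereas the projection reads off the $E'$-coordinate of a vector given in the basis $E$.
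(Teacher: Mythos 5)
Your computation is correct: with $M$ as you write it, $\det M = 9$, the third row of $M^{-1}$ is $\tfrac19(1,-2,1)$, and hence $p_{e'_3}(u)=\tfrac{u_1-2u_2+u_3}{9}\,e'_3$ (one can sanity-check this functional: it takes the value $1$ on $e'_3=(4,-2,1)^T$ and $0$ on $e'_1$ and $e'_2$). Since the linear form $u\mapsto u_1-2u_2+u_3$ has integer coefficients with gcd $1$, its image on $\mathbb Z^3$ is all of $\mathbb Z$, so indeed $p_{e'_3}(\mathbb Z^3)=\{i\cdot\tfrac19 e'_3 : i\in\mathbb Z\}$ and $c=\tfrac19$ works. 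This is, however, a genuinely different argument from the paper's. The paper never computes the projection: it observes that $\{r\in\mathbb R : \exists x\in\mathbb Z^3,\ p_{e'_3}(x)=r\,e'_3\}$ is a subgroup of $(\mathbb R,+)$ and proves it is discrete by contradiction, using the fact that $e'_1,e'_2,e'_3$ have integer components to reduce any alleged sequence $r_n\to 0$ to a bounded family of integer vectors, which can take only finitely many values. That argument is softer and more portable (it would survive a messier basis, e.g.\ in an attempted extension to $D>3$, without any matrix inversion), but it is non-constructive. Your route buys an explicit constant: with $c=\tfrac19$ the quantity $d$ in Proposition~\ref{lemma:meta2} becomes concrete, so the lower bound on $N$ needed for a prefix $(2,0)^j$ can be stated with explicit numerical constants rather than merely ``there exists $d>0$''. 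Both proofs are complete; yours is a fair, even slightly stronger, replacement.
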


\begin{proof}
  The set  of reals $r$ such that there exists an element $x$ in $\mathbb Z^3$ such that  $p_{e'_3}(x)  = r e'_3$ is obviously a group. So we only have to prove that this group is discrete, {\em i.e.} that there is no sequence $(r_n)_{n \in  \mathbb Z}$ of positive reals such that $\lim \limits_{n \rightarrow \infty} (r_n)= 0$.\\
 Assume, by contradiction, the existence of such a sequence, and let  $(x_n)_{n \in  \mathbb Z}$ be a sequence of vectors such that, for each integer $n$, $p_{e'_3}(x_n) = r_n$. \\
 A key-point is that vectors $e'_1$, $e'_2$ and $e'_3$ have  integer  components, so we can state $x_n = a_n e'_1 + b_n e'_2 + c_n e'_3$. 
 The sequence  $(x'_n)_{n \in  \mathbb Z}$ defined by  $x'_n = (a_n -  \lfloor a_n \rfloor) e'_1 + (b_n -  \lfloor b_n \rfloor) e'_2 + c_n e'_3$ 
 also is a sequence of integer vectors such that for each integer $n$, $p_{e'_3}(x'_n) = r_n$. Moreover this sequence is bounded. Thus $(x'_n)_n \in  \mathbb Z$ takes a finite number of values, which enforces that the sequence $(r_n)_n \in  \mathbb Z$ also takes a finite number of values, which is a contradiction. 
\end{proof}

For KSPM(3), after a short transitional of logarithmic length,  hypotheses of theorem \ref{corollary:peak} are verified , and the study of avalanches can be turned into a pseudo linear process. Note that a trivial framing of the maximal non-empty column $e(N)$ of a fixed point with $N$ grains shows that $e(N)$ is in $\Omega(\sqrt{N})$. As a consequence, pseudo local process stands for the asymptotically complete behavior of avalanches.\\

Unfortunately, the approach above does not hold for $D >3$.  The main reason is that, for $D = 3$ unfired columns induce a very particular and \emph{periodic} prefix ($(2, 0)^j$) on configurations. From $D = 4$, the structure of such a possible prefix is more complex and we did not yet get a tractable characterization of those prefixes. 

\section{Perspectives}

In this paper we described avalanches as pseudo local processes from a certain column $l$.

We proved this column to be logarithmic in the number of grains $N$ for KSPM(3), leading to an asymptotically complete description of avalanches in that case. Simulations for other parameter $D$ suggests that the same outcome also holds.

The pseudo local process description involves some properties on avalanches, which we hope will be useful toward the study of fixed points shape. For an avalanche $s$, a particularly interesting consequence is that two successive fixed points are equal from $l+D-1$ to $(\max s) -1$, which hints that next avalanche reaching this part of the configuration may have a similar behavior. This would lead to a knowledge on the likeness of successive avalanches and therefore a foresee on the shapes of fixed points. Further work may concentrate on this point, where the main purpose is to go ahead iterating evolution rules, and to describe fixed points with a plain formula.

\bibliographystyle{plain}
\bibliography{biblio}

\begin{thebibliography}{10}

\bibitem{bak88}
Per Bak, Chao Tang, and Kurt Wiesenfeld.
\newblock Self-organized criticality.
\newblock {\em Phys. Rev. A}, 38(1):364--374, Jul 1988.

\bibitem{creutz96}
Michael Creutz.
\newblock Cellular automata and self organized criticality.
\newblock In {\em in Some New Directions in Science on Computers}, 1996.

\bibitem{dartois}
Arnaud Dartois and Cl{\'e}mence Magnien.
\newblock Results and conjectures on the sandpile identity on a lattice.
\newblock In Michel Morvan and {\'E}ric R{\'e}mila, editors, {\em Discrete
  Models for Complex Systems, DMCS'03}, volume~AB of {\em DMTCS Proceedings},
  pages 89--102. Discrete Mathematics and Theoretical Computer Science, 2003.

\bibitem{dhar90}
Deepak Dhar.
\newblock Self-organized critical state of sandpile automaton models.
\newblock {\em Phys. Rev. Lett.}, 64(14):1613--1616, Apr 1990.

\bibitem{durandlose98}
J{\'e}r{\^o}me~Olivier Durand-Lose.
\newblock Parallel transient time of one-dimensional sand pile.
\newblock {\em Theor. Comput. Sci.}, 205(1-2):183--193, 1998.

\bibitem{formenti07}
Enrico Formenti, Beno\^{\i}t Masson, and Theophilos Pisokas.
\newblock Advances in symmetric sandpiles.
\newblock {\em Fundam. Inform.}, 76(1-2):91--112, 2007.

\bibitem{gajardo}
Anah\'{\i} Gajardo, Andr{\'e}s Moreira, and Eric Goles.
\newblock Complexity of langton's ant.
\newblock {\em Discrete Applied Mathematics}, 117(1-3):41--50, 2002.

\bibitem{propp}
David Gale, Jim Propp, Scott Sutherland, and Serge Troubetzkoy.
\newblock Further travels with my ant.
\newblock {\em Mathematical Entertainments column, Mathematical Intelligencer},
  17:48--56, 1995.

\bibitem{goles93}
Eric Goles and Marcos~A. Kiwi.
\newblock Games on line graphs and sand piles.
\newblock {\em Theor. Comput. Sci.}, 115(2):321--349, 1993.

\bibitem{goles10}
{E}ric {G}oles and {B}runo {M}artin.
\newblock {C}omputational {C}omplexity of {A}valanches in the {K}adanoff
  {T}wo-dimensional {S}andpile {M}odel.
\newblock In {TUCS}, editor, {\em {P}roceedings of {JAC} 2010 {J}ourn{\'e}es
  {A}utomates {C}ellulaires 2010}, pages 121--132, {T}urku {F}inland, 12 2010.
\newblock {F}.1.1.

\bibitem{goles02}
Eric Goles, Michel Morvan, and Ha~Duong Phan.
\newblock The structure of a linear chip firing game and related models.
\newblock {\em Theor. Comput. Sci.}, 270(1-2):827--841, 2002.

\bibitem{kadanoff89}
Leo~P. Kadanoff, Sidney~R. Nagel, Lei Wu, and Su-min Zhou.
\newblock Scaling and universality in avalanches.
\newblock {\em Phys. Rev. A}, 39(12):6524--6537, Jun 1989.

\bibitem{levine}
Lionel Levine and Yuval Peres.
\newblock Spherical asymptotics for the rotor-router model in z d.
\newblock {\em Indiana Univ. Math. J}, pages 431--450, 2008.

\bibitem{moore99}
Cristopher Moore and Martin Nilsson.
\newblock The computational complexity of sandpiles.
\newblock {\em Journal of Statistical Physics}, 96:205--224, 1999.
\newblock 10.1023/A:1004524500416.

\bibitem{phan08}
Thi Ha~Duong Phan.
\newblock Two sided sand piles model and unimodal sequences.
\newblock {\em ITA}, 42(3):631--646, 2008.

\end{thebibliography}

\end{document}